\newtheoremstyle{plain-boldhead}
  {\topsep}
  {\topsep}
  {\itshape}
  {}
  {\bfseries}
  {.}
  { }
  {\thmname{#1}\thmnumber{ #2}\thmnote{ (\bfseries #3)}}
\newtheoremstyle{definition-boldhead}
  {\topsep}
  {\topsep}
  {\normalfont}
  {}
  {\bfseries}
  {.}
  { }
  {\thmname{#1}\thmnumber{ #2}\thmnote{ (\bfseries #3)}}
\theoremstyle{plain-boldhead}
\newtheorem{theorem}{Theorem}
\theoremstyle{definition-boldhead}
\newtheorem{definition}{Definition}
\newcommand{\negspace}{\vspace{-0.5\baselineskip}}
\newcommand{\tup}[1]{%
  \relax\ifmmode
    \langle #1 \rangle%
  \else
    $\langle$#1$\rangle$%
  \fi}
\newcommand{\token}[1]{\textsc{#1}}
\newcommand{\omitit}[1]{}
\newenvironment{description*}%
  {\begin{description}%
    \setlength{\itemsep}{-5pt}%
    \setlength{\parsep}{-5pt}%
    \setlength{\topsep}{-5pt}}%
  {\end{description}}
\newenvironment{enumerate*}%
  {\begin{enumerate}%
    \setlength{\itemsep}{-5pt}%
    \setlength{\parsep}{-5pt}%
    \setlength{\topsep}{-5pt}}%
  {\end{enumerate}}
\newenvironment{itemize*}%
  {\begin{itemize}%
    \setlength{\itemsep}{0pt}%
    \setlength{\parsep}{0pt}%
    \setlength{\topsep}{0pt}}%
  {\end{itemize}}
\newenvironment{tabbing*}%
  {\begin{tabbing}%
    \setlength{\itemsep}{0pt}%
    \setlength{\parsep}{0pt}%
    \setlength{\topsep}{0pt}}%
  {\end{tabbing}}
\begin{document}

\title{\bf Fork Sequential Consistency is Blocking}

\author{Christian Cachin\thanks{IBM Research, Zurich Research
    Laboratory, CH-8803 R\"uschlikon, Switzerland. cca@zurich.ibm.com}
  \and Idit Keidar\thanks{Department of Electrical Engineering,
    Technion, Haifa 32000, Israel. \{idish@ee, shralex@tx\}.technion.ac.il}
  \and Alexander Shraer\footnotemark[2] 
}


\maketitle
\negspace\negspace\negspace
\begin{abstract}\noindent
We consider an untrusted server storing shared data on behalf of clients.
We show that no storage access protocol can on the one hand
preserve sequential consistency and wait-freedom when the server is correct,
and on the other hand always preserve fork sequential consistency.
\end{abstract}

\negspace\negspace
\section{Introduction}

We examine an online collaboration facility providing storage and data
sharing functions for remote clients that do not communicate directly
\cite{subversion, googledocs, wikipedia:dfs, web20}. 
Specifically, we consider a server that implements single-writer multi-reader registers. 
The storage server may be faulty, potentially exhibiting Byzantine
faults~\cite{mazsha02, lkms04, oprrei06, CSS07}.  When the server is
correct, strong liveness, namely \emph{wait-freedom}~\cite{herlih91},
should be guaranteed, as a client editing a document does not want to
be dependent on another client, which could even be in a different
timezone~\cite{web20}. In addition, although read/write operations of
different clients may occur concurrently, consistency of the shared
data should be provided. Specifically, we consider a service that,
when the server is correct, provides \emph{sequential consistency},
which ensures that clients have the same \emph{view} of the order of
read/write operations, which also respects the local order of
operations occurring at each client~\cite{Lamport79}. Sequential
consistency provides clients with a convenient abstraction of a shared
storage space.  It allows for more efficient implementations than
stronger consistency conditions such as
linearizability~\cite{herwin90}, especially when the system is not
synchronized~\cite{AW94}.

In executions where the server is faulty, liveness obviously cannot be
guaranteed. Moreover, with a Byzantine server, ensuring sequential
consistency is also impossible~\cite{CSS07}. Still, it is possible to
guarantee weaker semantics, in particular 
so-called \emph{forking} consistency notions~\cite{lkms04,mazsha02}.
These ensure that whenever the
server causes the views of two clients to differ in a single
operation, the two clients never again see each other's updates after
that. In other words, if an operation appears in the views of two clients,
these views are identical up to this operation.

Originally, \emph{fork-linearizability} was
considered~\cite{lkms04,mazsha02, CSS07}. In this paper, we examine the weaker
\emph{fork sequential consistency} condition, recently introduced by Oprea and
Reiter~\cite{oprrei06}, who showed that this new condition
is sufficient for certain applications. However, to date, no fork-sequentially-consistent storage
protocol has been proposed. In fact, Oprea and Reiter suggested this as a future
research direction~\cite{oprrei06}. Furthermore, Cachin et al.~\cite{CSS07} showed that the stronger notion of
fork-linearizability does not allow for wait-free implementations, but
conjectured that such implementations might be possible with
fork sequential consistency. Surprisingly, we prove here that no
storage access protocol can provide fork sequential consistency at all
times and also be sequentially consistent and wait-free whenever the
server is correct. This generalizes the impossibility result of Cachin
et al.~\cite{CSS07}, and requires a more elaborate proof.

In this paper we require only sequentially consistent
semantics when the server is correct. Though one may also consider
stronger semantics, such as linearizability, for this case, as our
goal is to prove an impossibility result, it suffices to address
 sequential consistency.  Our impossibility result a fortiori rules out the
existence of protocols with stronger semantics as well.

\section{Definitions}
\label{sec:def}

\paragraph{System model.} We consider an asynchronous distributed
system consisting of $n$ clients $C_1,\ldots, C_n$, a server~$S$, and
asynchronous FIFO reliable channels between the clients and $S$ (there
is no direct communication between clients).  The clients and the
server are collectively called \emph{parties}. System components are
modeled as deterministic I/O Automata~\cite{Lynch96}. An automaton has
a state, which changes according to \emph{transitions} that are
triggered by \emph{actions}. A \emph{protocol}~$P$ specifies the
behaviors of all parties. An execution of $P$ is a sequence of
alternating states and actions, such that state transitions occur
according to the specification of system components.

All clients follow the protocol, and any number of clients can fail by
crashing.  The server might be faulty and deviate arbitrarily from the
protocol, exhibiting so-called ``Byzantine'' faults~\cite{peshla80}.
A party that does not fail in an execution is \emph{correct}.  The
protocol emulates a \emph{shared functionality} $F$ to the clients,
defined analogously to shared-memory objects.

\paragraph{Events, operations, and histories.} Clients interact with
the functionality $F$ via \emph{operations} provided by $F$.  As
operations take time, they are represented by two \emph{events}
occurring at the client, an \emph{invocation} and a \emph{response}.
An operation is \emph{complete} if it has a response.  For a sequence
of events $\sigma$, $\textit{complete}(\sigma)$ is the maximal
subsequence of $\sigma$ consisting only of complete operations.

A \emph{history} is a sequence of requests and responses of $F$
occurring in an execution.  An operation $o$ \emph{precedes} another
operation $o'$ in a sequence of events $\sigma$, denoted $o <_\sigma
o'$, whenever $o$ completes before $o'$ is invoked in $\sigma$. Two
operations are \emph{concurrent} if neither one of them precedes the
other.  A sequence of events is \emph{sequential} if it does not
contain concurrent operations.  A sequence of events $\pi$
\emph{preserves the real-time order} of a history $\sigma$ if for
every two operations $o$ and $o'$ in $\pi$, if $o <_\sigma o'$ then
$o<_\pi o'$.  For a sequence of events $\pi$, the subsequence of $\pi$
consisting of events occurring at client $C_i$ is denoted by
$\pi|_{C_i}$.
For a sequential $\pi$, the prefix of $\pi$ ending with operation
$\textit{o}$ is denoted by $\pi^{\textit{o}}$.

An execution is \emph{admissible} if the
following two conditions hold: (1) the sequence of
events at each client consists of alternating invocations and matching
responses, starting with an invocation; and (2) the execution is fair.
\emph{Fairness} means, informally, that the execution does not halt
prematurely when there are still steps to be taken or messages to be
delivered (we refer to the standard literature for a formal definition
of admissibility and fairness~\cite{Lynch96}).

%

\paragraph{Read/write registers.} A functionality $F$ is defined via a
\emph{sequential specification}, which indicates the behavior of $F$
in sequential executions.

The basic functionality we consider is a \emph{read/write
  register}~$X$.  A register stores a value~$v$ from a domain
$\mathcal{X}$ and offers \emph{read} and \emph{write} operations.
Initially, a register holds a special value $\bot \not\in
\mathcal{X}$.  When a client~$C_i$ invokes a read operation, the
register responds with a value $v$, denoted $\textit{read}_i(X) \to
v$.  When $C_i$ invokes a write operation with value $v$, denoted
$\textit{write}_i(X,v)$, the response of $X$ is an acknowledgment,
denoted by \token{ok}.  The sequential specification requires that
each read operation from $X$ return the value written by the most
recent preceding write operation, if there is one, and the initial
value otherwise.  We assume that the values written to every
particular register are unique, i.e., no value is written more than
once.  This can easily be implemented by including the identity of the
writer and a sequence number together with the stored value.

In this paper, we consider \emph{single-writer/multi-reader (SWMR)}
registers, where for every register, only a designated writer may
invoke the write operation, but any client may invoke the read
operation.

\paragraph{Sequential consistency.}
One of the most important consistency conditions for concurrent access
is sequential consistency~\cite{Lamport79}, which preserves the
real-time order only for operations by the same client.  This is in
contrast to linearizability, which must preserve the real-time order
for all operations.

\begin{definition}[Sequential consistency~\cite{Lamport79}] 
  A history $\sigma$ is \emph{sequentially consistent} w.r.t.
  a functionality $F$ if it can be extended (by appending zero or more
  response events) to a history $\sigma'$, and there exists a
  sequential permutation $\pi$ of \textit{complete}($\sigma'$) such
  that:
\begin{enumerate}
\item For every client $C_i$, the sequence $\pi|_{C_i}$ preserves the real-time
  order of $\sigma$; and
\item The operations of $\pi$ satisfy the sequential specification of $F$.
\end{enumerate}
\end{definition}

Intuitively, sequential consistency requires that every operation
takes effect at some point and occurs somewhere in the permutation
$\pi$.  This guarantees that every write operation is eventually
seen by all clients.  In other words, if an operation writes $v$ to a
register $X$, there cannot be an infinite number of subsequent read
operations from register $X$ that return a value written to $X$ prior
to $v$.

\paragraph{Wait-freedom.} 
A shared functionality needs to ensure liveness.  A common requirement
is that clients are able to make progress independently of the actions
or failures of other clients.  A notion that formally captures this
idea is \emph{wait-freedom}~\cite{herlih91}.
\begin{definition}[Wait-free history]
  A history $\sigma$ is wait-free if every operation by a correct
  client in $\sigma$ is complete.
\end{definition}

\paragraph{Fork sequential consistency.} 
The notion of fork sequential consistency~\cite{oprrei06} requires,
informally, that when an operation is observed directly or indirectly
by multiple clients, then the history of events occurring before the
operation is the same at these clients.  For instance, when a client
reads a value written by another client, the reader is assured to be
consistent with the writer up to its write operation.

\begin{definition}[Fork sequential consistency]\label{def:fork}
  A history $\sigma$ is \emph{fork-sequentially-consistent} w.r.t. a
  functionality~$F$ if it can be extended (by appending zero or more
  response events) to a history $\sigma'$, such that for each client
  $C_i$ there exists a subsequence~$\sigma_i$ of
  $\textit{complete}(\sigma')$ and a sequential permutation~$\pi_i$ of
  $\sigma_i$ such that:
  \begin{enumerate*}
  \item All complete operations in $\sigma|_{C_i}$ are contained
    in~$\sigma_i$;
  \item For every client $C_j$, the
   sequence $\pi_i|_{C_j}$ preserves the real-time order of~$\sigma$;
  \item The operations of $\pi_i$ satisfy the sequential specification
    of $F$; and
  \item (\emph{No-join}) For every $\textit{o} \in \pi_i \cap \pi_j$,
    it holds that $\pi_i^{\textit{o}} = \pi_j^{\textit{o}}$.
  \end{enumerate*}
  A permutation $\pi_i$ satisfying these properties is called a
  \emph{view} of~$C_i$.
\end{definition}

Note that a view $\pi_i$ of $C_i$ contains at least all those
operations that either occur at $C_i$ or are apparent from $C_i$'s
interaction with $F$. A fork-sequentially-consistent history in which
some permutation $\pi$ of $\textit{complete}(\sigma')$ is a possible
view of all clients is sequentially consistent.

We are now ready to define a fork-sequentially-consistent storage
service.  It should guarantee sequential consistency and wait-freedom
when the server is correct, and fork sequential consistency otherwise.

\begin{definition}[Wait-free fork-sequentially-consistent Byzantine emulation]
  A protocol $P$ is a wait-free fork-sequentially-consistent Byzantine
  emulation of a functionality $F$ on a Byzantine server $S$ if $P$
  satisfies the following conditions:
  \begin{enumerate*}
  \item If $S$ is correct, the history of every admissible execution
    of $P$ is sequentially consistent w.r.t. $F$ and wait-free; and
  \item The history of every admissible execution of $P$ is fork
    sequentially consistent w.r.t. $F$.
  \end{enumerate*}
\end{definition}

We show next that wait-free fork-sequentially-consistent Byzantine
emulations of SWMR registers are impossible.

\section{Impossibility of Wait-Freedom with Fork Sequential Consistency}
\label{fork-seq}

  
\begin{theorem}\label{thm:wait}
  There is no wait-free fork-sequentially-consistent Byzantine
  emulation of $n \geq 2$ SWMR registers on a Byzantine server $S$.
\end{theorem}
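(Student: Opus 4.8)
The plan is to assume such a protocol $P$ exists and derive a contradiction by exhibiting a single admissible execution with a \emph{Byzantine} server whose history violates the no-join property, and is therefore not fork sequentially consistent. I would use two of the registers, $X_1,X_2$, with $C_1$ the writer of $X_1$ and $C_2$ the writer of $X_2$ (all other clients idle), and let both read both. The heart of the argument is to force, in the Byzantine execution, the \emph{cyclic} pattern in which $C_1$ reads the value $b$ that $C_2$ writes \emph{before} $C_1$ writes its own value $a$, while symmetrically $C_2$ reads $a$ \emph{before} it writes $b$.

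First I would build two correct-server executions to serve as ``simulations.'' In $S_1$ the honest server runs a schedule in which $C_2$ writes $b$ to $X_2$ and completes, after which $C_1$ repeatedly reads $X_2$. Since $P$ is sequentially consistent when the server is correct, the eventual-visibility property (no infinite run of reads of $X_2$ may keep returning values older than $b$) guarantees a finite prefix in which some read of $C_1$ returns $b$; by wait-freedom I then let $C_1$ write $a$ to $X_1$, which completes. Thus $S_1$ is admissible, sequentially consistent and wait-free, $C_1$'s local history ends with $\ldots,\textit{read}_1(X_2)\to b,\ \textit{write}_1(X_1,a)$, and every witnessing permutation orders $\textit{write}_2(X_2,b)$ before $\textit{write}_1(X_1,a)$. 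Execution $S_2$ is the mirror image: $C_1$ writes $a$ and completes, $C_2$ reads $X_1$ until (by eventual visibility) it returns $a$, and then $C_2$ writes $b$, so there $\textit{write}_1(X_1,a)$ precedes $\textit{write}_2(X_2,b)$.

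Next I would have a Byzantine server \emph{splice} the two simulations. Because each client talks only to the server over its own FIFO channel, the server can run $S_1$ internally (with a simulated $C_2$) to produce exactly the message stream $C_1$ saw in $S_1$, and run $S_2$ internally to produce the stream $C_2$ saw in $S_2$, sending each client its stream. As the automata are deterministic and each client's local view matches its simulation, $C_1$ outputs $\textit{read}_1(X_2)\to b$ then $\textit{write}_1(X_1,a)$, and $C_2$ outputs $\textit{read}_2(X_1)\to a$ then $\textit{write}_2(X_2,b)$. In the resulting history $\sigma$, any view $\pi_1$ must place $\textit{write}_2(X_2,b)$ before $\textit{read}_1(X_2)\to b$ (register specification, property~3) and, since $\pi_1|_{C_1}$ respects $C_1$'s local order (property~2), before $\textit{write}_1(X_1,a)$; symmetrically $\pi_2$ places $\textit{write}_1(X_1,a)$ before $\textit{write}_2(X_2,b)$. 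Each write is its writer's own operation and is also required in the other view to justify the read of its value, so both writes lie in $\pi_1\cap\pi_2$. Applying no-join to $\textit{write}_1(X_1,a)\in\pi_1\cap\pi_2$ yields $\pi_1^{\textit{write}_1(X_1,a)}=\pi_2^{\textit{write}_1(X_1,a)}$, whence $\textit{write}_2(X_2,b)$ precedes $\textit{write}_1(X_1,a)$ in $\pi_2$ as well, contradicting the order forced in $\pi_2$. Hence $\sigma$ is not fork sequentially consistent, contradicting condition~2 of the emulation.

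I expect the \emph{first} step to be the main obstacle. Unlike linearizability, sequential consistency does not force a single read issued after another client's completed write to observe that write, so I cannot simply posit the reads returning $a$ and $b$. The real work is to extract, purely from eventual visibility together with wait-freedom, correct-server executions realizing the read-\emph{before}-write ordering at each client needed to close the cycle, and to check that these truncated-then-extended schedules stay admissible and sequentially consistent and that the Byzantine server can splice them without either client detecting any inconsistency on its channel.
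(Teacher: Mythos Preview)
Your argument is correct and takes a genuinely different route from the paper's. The paper builds three executions $\alpha,\beta,\gamma$ with a long common prefix: $C_2$ performs a sequence of write/read pairs while $C_1$'s single write $w_1$ is interleaved via a delicate message-delay schedule, so that at the branching point $t_0$ the Byzantine server in $\gamma$ possesses every client message it needs and can replay to each client exclusively \emph{real} messages already sent by the other (it suppresses some, but fabricates none). The contradiction there is asymmetric: no-join forces $w_2^{z-1}$ into $\pi_1$ before $w_1$, and then $C_1$'s last read $r_1^l$ returning the stale $v_{z-2}$ violates the register specification. You instead construct two completely independent correct-server runs $S_1,S_2$ and let the Byzantine server run each internally, simulating the absent client from scratch; your contradiction is the clean symmetric cycle between $\textit{write}_1(X_1,a)$ and $\textit{write}_2(X_2,b)$ forced by no-join. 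Your route is shorter and more transparent in the stated model of deterministic I/O automata, where a Byzantine server can indeed manufacture an entire client transcript. What the paper's heavier shared-prefix construction buys is that its Byzantine server never forges a client message, so the impossibility would persist even in models where clients' messages are authenticated---precisely the setting in which forking-consistency protocols are actually deployed.
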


\begin{proof}
  Towards a contradiction assume that there exists such a
  protocol~$P$.  Then in any admissible execution of $P$ with a
  correct server, every operation of a correct client completes.  We
  next construct three executions $\alpha$, $\beta$, and $\gamma$ of
  $P$, shown in Figures~\ref{fig:waiting-alpha}--\ref{fig:waiting-gamma}.  
  All three executions are admissible, since
  clients issue operations sequentially, and
  every message sent between two correct parties is eventually delivered.
  There are two clients $C_1$ and $C_2$, which
  are always correct, and access two SWMR registers $X_1$ and $X_2$.  
  Protocol $P$ describes the asynchronous
  interaction of the clients with $S$; this interaction is depicted in
  the figures only when necessary.
  
  \paragraph{Execution $\alpha$.} 
  In execution $\alpha$, the server is correct.  The execution is
  shown in Figure~\ref{fig:waiting-alpha} and begins with four
  operations by $C_2$: first $C_2$ executes a write operation with
  value $v_1$ to register $X_2$, denoted $w_2^1$, then an operation
  reading register $X_1$, denoted $r_2^1$, then an operation writing
  $v_2$ to $X_2$, denoted $w_2^2$, and finally again a read operation
  of $X_1$, denoted $r_2^2$.  Since $S$ and $C_2$ are correct and $P$
  is wait-free with a correct server, all operations of $C_2$
  eventually complete.
  
   \begin{figure}[htb]
    \begin{center}
    \includegraphics[height=4cm]{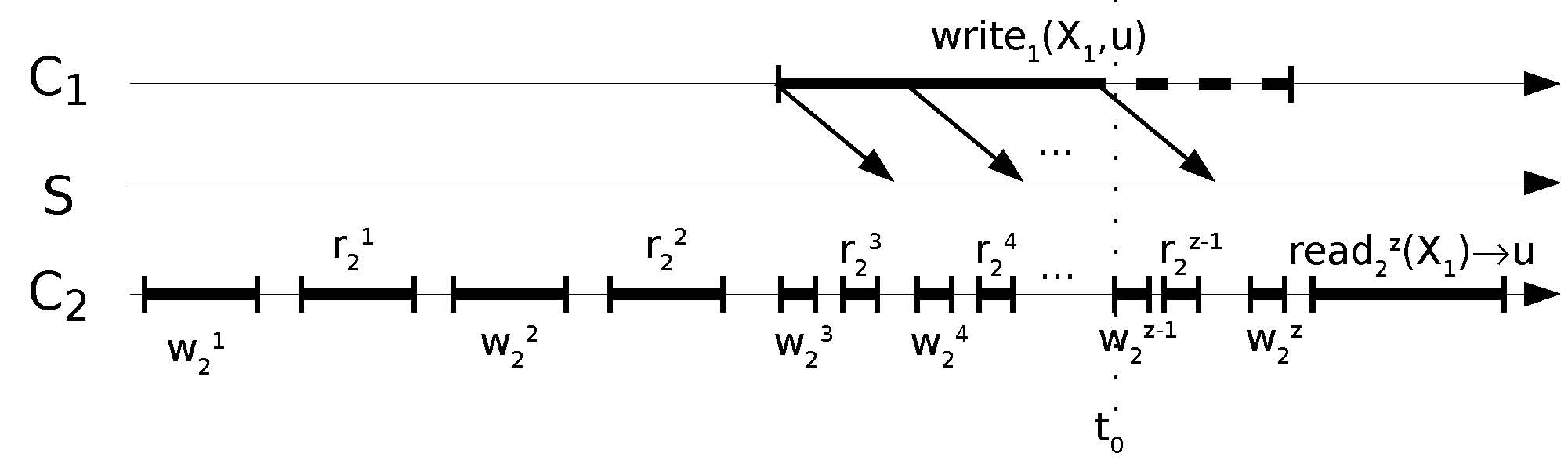}
      \end{center}
     \caption{Execution $\alpha$, where $S$ is correct.}
    \label{fig:waiting-alpha}
  \end{figure}
  
  Execution $\alpha$ continues as follows.  $C_1$ starts to execute a
  single write operation with value $u$ to $X_1$, denoted $w_1$.
  Every time a message is sent from $C_1$ to $S$ during this
  operation, and as long as no read operation by $C_2$ from $X_1$
  returns a value different from $\bot$, the following steps are
  repeated in order, for $i=3, 4, \dots$:
  \begin{enumerate}[(a)]
  \item The message from $C_1$ is delayed by the asynchronous network;
  \item $C_2$ executes an operation writing $v_i$ to $X_2$, denoted
    $w^i_2$;
  \item $C_2$ executes an operation reading $X_1$, denoted $r^i_2$;
    and
  \item the delayed message from $C_1$ is delivered to $S$.
  \end{enumerate}
  Note that $w^i_2$ and $r^i_2$ complete by the assumptions that $P$
  is wait-free and that $S$ is correct. For the same reason, operation
  $w_1$ eventually completes.  After $w_1$ completes, and while $C_2$
  does not read any non-$\bot$ value from $X_1$, $C_2$ continues to
  execute alternating operations $w^i_2$ and $r^i_2$, writing $v_i$ to
  $X_2$ and reading $X_1$, respectively.  This continues until some
  read returns a non-$\bot$ value.  Because $S$ is correct, eventually
  some read of $X_1$ is guaranteed to return $u\neq\bot$ by sequential
  consistency of the execution.  We denote the first such read by
  $r^z_2$.  This is the last operation of $C_2$ in $\alpha$.  If
  messages are sent from $C_1$ to $S$ after the completion of $r^z_2$,
  they are not delayed.
  
  Note that the prefix of $\alpha$ up to the completion of $r^3_2$ is
  indistinguishable to $C_2$ from an execution in which no client
  writes to $X_1$, and therefore $r^1_2$, $r^2_2$, and $r^3_2$ return
  the initial value $\bot$. Hence, $z\geq 4$.

  We denote the point of invocation of $w^{z-1}_2$ in $\alpha$ by
  $t_0$.  It is marked by a dotted line.  Executions $\beta$ and
  $\gamma$ constructed below are identical to $\alpha$ before $t_0$,
  but differ from $\alpha$ starting at $t_0$.
  
  \begin{figure}[htb]
    \begin{center}
      \includegraphics[height=3.6cm]{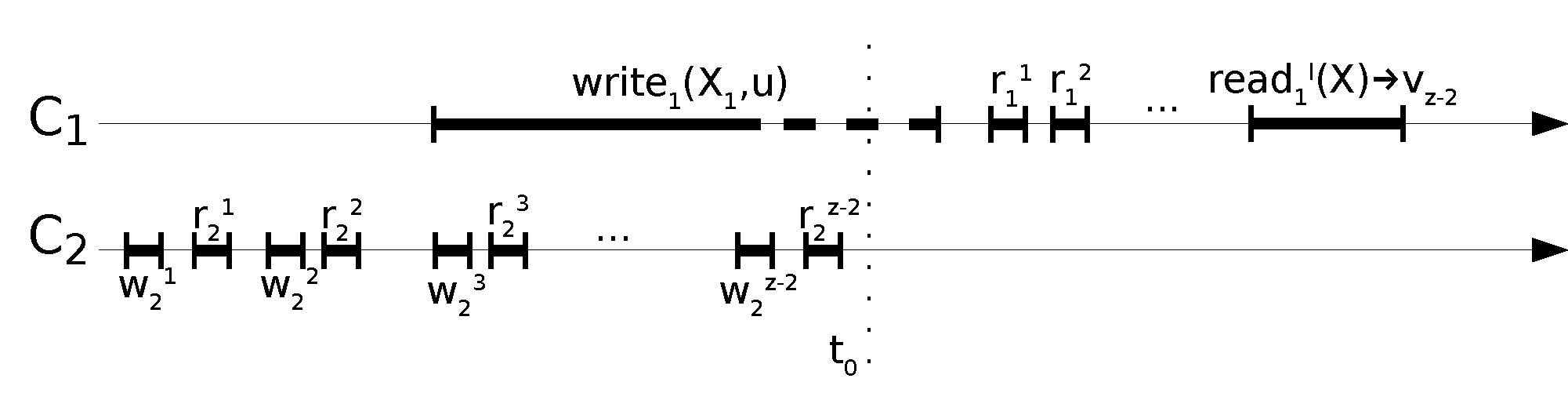}
    \end{center}
    \caption{Execution $\beta$, where $S$ is correct.}
    \label{fig:waiting-beta}
  \end{figure}

  \paragraph{Execution $\beta$.}
  We next define execution $\beta$, shown in
  Figure~\ref{fig:waiting-beta}, in which the server is also correct.
  Execution $\beta$ is identical to $\alpha$ up to the end of
  $r^{z-2}_2$ (before $t_0$), but then $C_2$ halts.  In other words,
  the last two write-read pairs of $C_2$ in $\alpha$ are missing in
  $\beta$.  Operation $w_1$ is invoked in $\beta$ like in $\alpha$ and
  begins after the completion of $r^2_2$ (notice that $r^2_2$ is in
  $\beta$ since $z\geq 4$).  Because the protocol is wait-free with
  the correct server, operation $w_1$ completes.  Afterwards, $C_1$
  repeatedly reads $X_2$ until $v_{z-2}$ is returned.  Because the
  execution is sequentially consistent with the correct server, a read
  of $X_2$ eventually returns~$v_{z-2}$.  We denote the $i$-th read
  operation of $C_1$ by $r^i_1$ and the read operation that returns
  $v_{z-2}$ by $r^l_1$.

  \paragraph{Execution $\gamma$.} 
  The third execution $\gamma$ is shown in
  Figure~\ref{fig:waiting-gamma}; here, the server is faulty.
  Execution~$\gamma$ proceeds just like the common prefix of $\alpha$
  and $\beta$ before $t_0$, and client $C_1$ invokes $w_1$ in the same
  way as in $\alpha$ and in $\beta$.  From $t_0$ onward, the server
  simulates $\beta$ to $C_1$.  This is easy because $S$ simply hides
  from $C_1$ all operations of $C_2$ starting with $w^{z-1}_2$.  The
  server also simulates $\alpha$ to $C_2$.  We next explain how this
  is done.  Notice that in $\alpha$, the server receives at most one
  message from $C_1$ between $t_0$ and the completion of $r^z_2$, and
  this message is sent before $t_0$ by construction of $\alpha$.  If
  such a message exists in $\alpha$, then in $\gamma$, which is
  identical to $\alpha$ before $t_0$, the same message is sent by
  $C_1$.  Therefore, the server has all information needed to simulate
  $\alpha$ to $C_2$ and $r^{z}_2$ returns $u$.

  \begin{figure}[htb]
    \begin{center}
      \includegraphics[height=3.6cm]{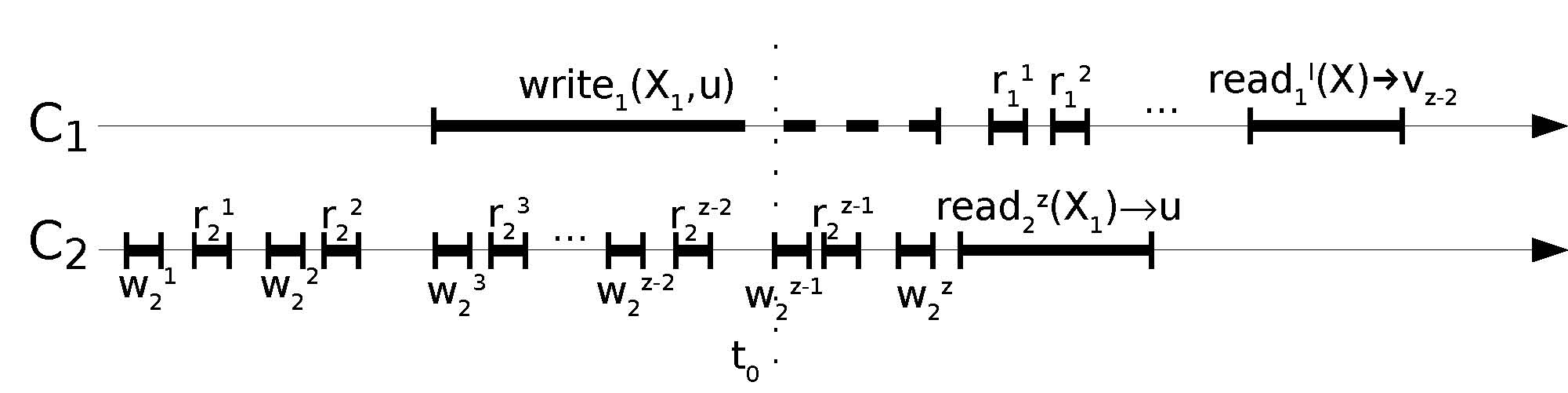}
    \end{center}
    \caption{Execution $\gamma$, where $S$ is faulty and simulates
      $\alpha$ to $C_2$ and $\beta$ to $C_1$.}
    \label{fig:waiting-gamma}
  \end{figure}

  Thus, $\gamma$ is indistinguishable from $\alpha$ to $C_2$ and
  indistinguishable from $\beta$ to $C_1$. However, we next show that
  $\gamma$ is not fork-sequentially-consistent.  Consider the
  sequential permutation~$\pi_2$ required by the definition of fork
  sequential consistency, i.e., the view of $C_2$.  As the real-time
  order of $C_2$'s operations and the sequential specification of the
  registers must be preserved in $\pi_2$, and since $r^1_2$, ...,
  $r^{z-1}_2$ return $\bot$ but $r^z_2$ returns $u$, we conclude that
  $w_1$ must appear in $\pi_2$ and is located after $r^{z-1}_2$ but
  before $r^z_2$.  Because $w_1$ is one of $C_1$'s operations, it also
  appears in $\pi_1$.  By the no-join property, the sequence of
  operations preceding $w_1$ in $\pi_2$ must be the same as the
  sequence preceding $w_1$ in $\pi_1$.  In particular, $w_2^{z-1}$ and $w_2^{z-2}$
  appear in $\pi_1$ before $w_1$, and $w_2^{z-2}$ precedes $w_2^{z-1}$.  Since the real-time order of
  $C_1$'s operations must be preserved in $\pi_1$, operation $w_1$
  and, hence, also $w^{z-1}_2$, appears in $\pi_1$ before $r^l_1$.
  But since $w^{z-1}_2$ writes $v_{z-1}$ to $X_2$ and $r^l_1$ reads 
  $v_{z-2}$ from $X_2$, this violates the sequential specification of $X_2$ 
  ($v_{z-2}$ is written only by $w^{z-2}_2$).  
  This contradicts the assumption that $P$ guarantees fork sequential 
  consistency in all executions.
\end{proof}

\section{Conclusions}

When clients store their data on an untrusted server, strong
guarantees should be provided whenever the server is correct, and
forking conditions when the server is faulty. Since it was discovered
that fork-linearizability does not allow for protocols that are
wait-free in all executions where the server is correct~\cite{CSS07},
the weaker condition of fork sequential consistency was expected to be
a promising direction to remedy this shortcoming~\cite{CSS07,
  oprrei06}. In this paper we proved that this is not the case, and in
fact, fork sequential consistency suffers from the same limitation.

{\small
\bibliographystyle{abbrv}
\bibliography{fscblocking}
}
\end{document}